\newlength{\figurewidth}
\newlength{\smallfigurewidth}
\newtheorem{defn}{Definition}
\newtheorem{theorem}{Theorem}
\newtheorem{proof}{Proof}
\begin{document}

\title
{\large
\textbf{Higher Order Context Transformations}
}

\author{%
Michal Vasinek$^{\ast}$, Jan Platos$^{\ast}$\\[0.5em]
{\small\begin{minipage}{\linewidth}\begin{center}
\begin{tabular}{c}
$^{\ast}$VSB-TU Ostrava \\
17. listopadu \\
Ostrava, 708 00, Czech Republic \\
\url{michal.vasinek@vsb.cz,jan@platos.vsb.cz} \\[0.5em]
\end{tabular}
\end{center}\end{minipage}}
}

\maketitle
\thispagestyle{empty}

\begin{abstract}
The context transformation and generalized context transformation methods, we introduced recently, were able to reduce zero order entropy by exchanging digrams, and as a consequence, they were removing mutual information between consecutive symbols of the input message. 
These transformations were intended to be used as a preprocessor for zero-order entropy coding algorithms like Arithmetic or Huffman coding, since we know, that especially Arithmetic coding can achieve a compression rate almost
of the size of Shannon's entropy.

This paper introduces a novel algorithm based on the concept of generalized context transformation, that allows transformation of words longer than simple digrams. The higher order contexts are exploited using recursive form of a generalized context transformation.
It is shown that the zero order entropy of transformed data drops significantly, but on the other hand, the overhead given by a description of individual transformations increases and it has become a limiting factor in a successful transformation of smaller files.
\end{abstract}

\Section{Introduction}

Arithmetic coding\cite{wit87} is a very successful compresison method that is able to achieve compression rate almost of the size of the limit given by Shannon's formula. In our work we rely on this fact and we assume that Shannon's entropy of transformed data is approximately achievable.
In our previous work about context transformations \cite{vasinek14} and generalized context transformations \cite{vasinek15} we studied, under what conditions the exchange of two different digrams, beginning with the same symbol, leads to the reduction of Shannon's entropy.

In the present paper we shall introduce a modified version of the generalized context transformation algorithm that exploits the fact, that during digrams exchange process, we get an information about positions of particular digram in input message. These positions are later used to form conditional distributions describing occurences of symbols following the transformation digrams and the process of the search for entropy reducing transformations is extended to trigrams and in the same way also for other longer words.


\Section{Transformations}

The generalized context transformation is an exchange of two different digrams $\alpha\beta$ and $\alpha\gamma$ throughout the input message $m$ over the alphabet $\Sigma$. The exact definition is as follows:

\begin{defn}
\label{def:gct}
Generalized context transformation(GCT) is a mapping $GCT(\alpha\beta\leftrightarrow\alpha\gamma,m):\Sigma^{n}\rightarrow\Sigma^{n}$,
$\Sigma$ is the alphabet of the input message $m$ and $n$ is the length of the input message, that exchanges all digrams $\alpha\beta$ for digram $\alpha\gamma$ and vice-versa.
\end{defn}

We were considering also transformations of the more general form $\alpha\beta\leftrightarrow\gamma\delta$, but in such case, the space, where the transformation would be searched, is of order $|\Sigma|^4$, meanwhile our preferred space, the space of generalized context transformations, is only of order $|\Sigma|^2$. It is computationally much more efficient to work with GCT. 

GCT can be applied on the message from left to right $GCT_{\rightarrow}$ or from right to left $GCT_{\leftarrow}$, even though both transformations seem to be the same and in most cases they provide the same entropy reduction, they differ when we deal with the transformation of the form $GCT(\alpha\alpha\leftrightarrow\alpha\beta)$. Meanwhile the value of zero order entropy change $\Delta H$ can be precisely predicted when $GCT_{\leftarrow}$ is applied, the opposite direction transformation $GCT_{\rightarrow}$ depends also on occurences of $n-grams$ consisting of $n$ repetition of a symbol $\alpha$. 

For instance suppose that the initial message is $m=\alpha\alpha\alpha$ and we apply the transformation $GCT_{\leftarrow}$ first, leaving $m' = GCT_{\leftarrow}(\alpha\alpha\leftrightarrow\alpha\beta, \alpha\alpha\alpha) = \alpha\beta\beta$, in the initial state there were exactly two digrams $\alpha\alpha$ and three symbols $\alpha$. In the final state both digrams $\alpha\alpha$ were replaced by $\alpha\beta$ and we have a number of symbols $\beta$ equal to the number of digrams $\alpha\alpha$ in the initial state of the message. The opposite direction transformation leaves $m' = GCT_{\rightarrow}(\alpha\alpha\leftrightarrow\alpha\beta, \alpha\alpha\alpha) = \alpha\beta\alpha$ and without a knowledge of a distribution of n-grams of the type $\alpha^n$ we are unable to precisely predict the number of newly introduced occurences of $\beta$.

Our algorithm uses a right to left form of transformation $GCT_{\leftarrow}$, it allows us to precisely predict how the zero order entropy will change before the arbitrary generalized context transformation is applied. Inverse transformation to $GCT_{\leftarrow}$ is $GCT_{\rightarrow}$ and vice versa. When a transformation is applied in one direction then the same transformation applied in the opposite direction is its inverse\cite{vasinek15}.

\Section{Shannon's entropy reduction}
\label{sec:H-red}

The zero order or Shannon's\cite{shannon1948} entropy of random variable $X$ is defined  by:

\begin{equation}
H(X)=-\sum_{x\in\Sigma}p(x)\log p(x)
\end{equation}

where, $p(x)$ is a probability of a symbol $x$ in input message and $log(x)$ represents logarithm of $x$ to the base 2. When we use a term entropy, we always mean Shannon's entropy. Suppose that the transformation $GCT_{\leftarrow}(\alpha\beta\leftrightarrow\alpha\gamma)$ is applied. Only probabilities of symbols $\beta$ and $\gamma$ will change. The exact change of probabilities $p_i(x)$, where $i=0$ means the initial state probabilities and $i=1$ the state after the GCT is applied, are expressed by:

\begin{equation}
p_{1}(\beta) = p_{0}(\beta) + p_{0}(\alpha,\gamma) - p_{0}(\alpha,\beta)
\end{equation}

\begin{equation}
p_{1}(\gamma) = p_{0}(\gamma) + p_{0}(\alpha,\beta) - p_{0}(\alpha,\gamma)
\end{equation}

The resulted change of entropy $\Delta H$ is then expressed as a difference of entropies between a final and initial state:

\[
\Delta H = H_{1} - H_{0} 
\]

\begin{equation}
\label{eq:dH}
\Delta H = -\sum_{x\in\{\beta,\gamma\}}p_{1}(x)\log p_{1}(x) + \sum_{x\in\{\beta,\gamma\}}p_{0}(x)\log p_{0}(x)
\end{equation}

Any transformation with $\Delta H$ negative reduces zero order entropy. This zero order entropy reduction has an impact on the values of other information theoretic\cite{Cover} quantities, especially the mutual information drops together with zero order entropy, since the mutual information is given as a relative entropy between the distribution of symbols $p(x)$ and the conditional distribution $p(x|\alpha)$ and it can be interpreted as their distance. The purpose of transformations is to decrease the average distance, given by relative entropy, between the distribution of symbols and all other conditional distributions. Generalized context transformation that reduces entropy is described by the following theorem:

\begin{theorem}
\label{thm:reduction}
Suppose the generalized context transformation $GCT(\alpha\beta\leftrightarrow\alpha\gamma)$. Let $p_{0}(\beta)$ and $p_{0}(\gamma)$ are probabilities of symbols before the transformation is applied and let $p_{0,max}=max\{p_{0}(\beta),p_{0}(\gamma)\}$. After the transformation, the associated probabilities are $p_{1}(\beta)$, $p_{1}(\gamma)$ and $p_{1,max}=max\{p_{1}(\beta),p_{1}(\gamma)\}$. If $p_{1,max} > p_{0,max}$ then the generalized context transformation $T$ reduces entropy.
\end{theorem}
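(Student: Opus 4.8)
The plan is to reduce the entropy change to the behaviour of a single one‑variable function and to exploit its concavity. First I would observe that the update rules for $p_1(\beta)$ and $p_1(\gamma)$ imply the conservation law $p_1(\beta)+p_1(\gamma)=p_0(\beta)+p_0(\gamma)=:S$, since the increment $p_0(\alpha,\gamma)-p_0(\alpha,\beta)$ added to $p(\beta)$ is exactly cancelled by the opposite increment added to $p(\gamma)$. Because the transformation leaves every other symbol probability untouched, equation (\ref{eq:dH}) already isolates the only terms that change, so $\Delta H$ depends solely on how the fixed mass $S$ is split between $\beta$ and $\gamma$ before and after the transformation.

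Next I would introduce the function $f(x)=-x\log x-(S-x)\log(S-x)$ on $[0,S]$, with the usual convention $0\log 0=0$, so that the local entropy contribution of $\{\beta,\gamma\}$ in state $i$ is $f(p_i(\beta))$, and hence $\Delta H=f(p_1(\beta))-f(p_0(\beta))$. A direct differentiation gives $f''(x)=-\frac{1}{\ln 2}\left(\frac1x+\frac1{S-x}\right)<0$ on $(0,S)$, so $f$ is strictly concave, and the identity $f(x)=f(S-x)$ shows it is symmetric about $x=S/2$. Consequently $f$ is strictly decreasing in $|x-S/2|$, equivalently a strictly decreasing function of $\max\{x,S-x\}$.

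Finally I would note that, for each state $i$, $\max\{p_i(\beta),p_i(\gamma)\}=\max\{p_i(\beta),S-p_i(\beta)\}$, which is exactly $p_{i,max}$. Thus the hypothesis $p_{1,max}>p_{0,max}$ says precisely that $p_1(\beta)$ lies strictly farther from $S/2$ than $p_0(\beta)$ does, and the monotonicity of $f$ established above yields $f(p_1(\beta))<f(p_0(\beta))$, i.e. $\Delta H<0$, which is the claimed entropy reduction.

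I do not expect a serious obstacle here; the only points requiring care are the degenerate situations (if $S=0$ the transformation is vacuous, and if one of the probabilities vanishes the relevant one‑sided limits of $f$ must be invoked by continuity) and making explicit that it is the symmetry $f(x)=f(S-x)$ that lets $p_{i,max}$ be attained by either $\beta$ or $\gamma$ without affecting the argument.
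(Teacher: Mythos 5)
Your proposal is correct and follows essentially the same route as the paper: both arguments fix the conserved sum $S=p(\beta)+p(\gamma)$, view the two-symbol entropy contribution as a concave function on $\langle 0;S\rangle$ symmetric about $S/2$, and conclude that on the upper half-interval this function is strictly decreasing in the larger of the two probabilities, so $p_{1,max}>p_{0,max}$ forces $\Delta H<0$. Your version merely makes explicit (via $f''<0$, the symmetry $f(x)=f(S-x)$, and the degenerate cases) the steps the paper states informally.
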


\begin{proof}
Let $c=p_{0}(\alpha)+p_{0}(\beta)=p_{1}(\alpha)+p_{1}(\beta)$, then since the entropy function of two different letters is defined on the interval $\langle 0;c\rangle$ and it is concave with maximum at $c/2$ and minimum at $0$ and $c$, then $p_{0,max}$ has to be located on the interval $p_{0,max}\in\langle c/2;c\rangle$, but on that particular interval the higher the maximum is the lower the entropy is, so if we increase the maximum(or we can say increase the absolute value of difference  $|p_{\beta}-p_{\alpha}|$), then the entropy will decrease.
\end{proof}

\clearpage
\Section{Algorithm}
\label{sec:algorithm}

Proposed algorithm aims to reduce entropy by sorting all conditional probabilities, such that the symbol's ordering, based on their probabilities, is approximately the same also in cases of conditional probabilities. For instance if the space character is the most frequent character in the text, then after the transformation, it should be, at least approximately, also the most frequent character if we consider the conditional distribution of symbols following arbitrary prefix $w$. The algorithm consists of three stages:

\begin{itemize}
\item Initial - collection of involved statistics.
\item Search for and application of generalized context transformations.
\item Storage.
\end{itemize}

\SubSection{Initial phase}

In the initial phase the file is passed once. In this pass the algorithm counts frequencies of all symbols. These frequencies are then used to sort symbols yielding an ordered set of indices $s$ to the alphabet. In proposed algorithm alphabet symbols are one byte values.

\begin{algorithm}
\caption{Preparation for search and application stage}
\label{alg:prep}
\begin{algorithmic}
\Function{HighOrderCT}{}
\State $contextTree \gets rootNode$
\For{$i=0;i<|\Sigma|;i{+}{+}$}
\State $positions \gets CollectPositions(s[i])$
\State $child \gets contextTree.AppendChild(s[i])$
\State $RecursiveCT(child,positions)$
\EndFor
\EndFunction
\end{algorithmic}
\end{algorithm}

Algorithm 1 loops through all symbols from the most frequent one to the least frequent one. Each time it passes through input message(file) $m$, it collects positions $t_i=\{j|m[j] = s[i]\}$ of symbol $\alpha = s[i]$ and builds a frequency table $s_\alpha$ of symbols at positions $t_{i,j}+1$ following the symbol $\alpha$. The complexity of this part is $O(|\Sigma|n)$ but in exchange, as will be discussed in the following section, we restrict every other search for and application of transformation onto the space of positions $t_i$ and in a function $RecursiveCT$ we won't need any other pass through the whole message, but only through the set of positions $t_i$.

\SubSection{Context transformation by maximal entropy reduction}

Suppose that the $RecursiveCT$ function has been called with a symbol $s[i]=\alpha$, then  we can form an ordered set $s_\alpha[j]$ of symbols ordered by conditional frequencies(or probabilities) $f(X|\alpha)$. 
In a final state, we would like to have a message in a state, such that for all prefixes $\alpha w$, where for all substrings $w$: $w_i\neq \alpha$, the same symbols are ordered at the same positions: $s_{\alpha w}[j] = s[j]$.
The $RecursiveCT$ function is given in the listing of Algorithm 2.

\begin{algorithm}
\caption{Recursive search for and application of context transformation}
\label{alg:max-ent-rec}
\begin{algorithmic}
\Function{RecursiveCT}{$parentNode, positions$}
\State $\alpha \gets parentNode.Symbol()$
\Repeat
\State $\beta,\gamma,dH \gets SearchCT(\alpha)$
\If{$dH > lim$}
\State \textbf{break}
\EndIf
\State $node \gets parentNode.AppendChild(\beta, \gamma)$
\State $pos \gets ApplyCT(\alpha,\beta,\gamma)$
\State $RecursiveCT(node, pos)$
\Until{$true$}
\EndFunction
\end{algorithmic}
\end{algorithm}

The recursive algorithm searches for transformations until it is able to found a transformation that reduces entropy more, than the limit given by the $lim$ variable. The search is implemented in the function $SearchCT$ given in the listing of Algorithm 3. 
When the transformation is found, it is applied by the function $ApplyCT$, this function exchanges all symbols $\beta$ and $\gamma$ found by $SearchCT$ and if the number of occurences of symbol $\gamma$ is larger than of the symbol $\beta$, then it returns former positions of symbol $\gamma$, otherwise it returns former positions of symbol $\beta$. These positions will be later used in a transformation of higher order contexts.

The structure of applied transformations is stored as a tree, the so called context transformation tree. The root node of the tree doesn't represent any transformation and its children, called context symbol nodes, are individual context symbols $s[i]$ selected in Algorithm 1. Each transformation is stored as a new child node of the input parameter $parentNode$.

In the last step the function $RecursiveCT$ is called again. A context symbol of this call is the more frequent character and its corresponding set of positions. The set of positions $pos\beta$ resp. $pos\gamma$ are former positions of symbols $\gamma$ resp. $\beta$ in context of symbol $\alpha$. When the function $RecursiveCT$ is called for the first time and the transformation has been found, then its application corresponds to the mutual exchange of digrams $\alpha\beta$ and $\alpha\gamma$ and vice versa in the former message. Each other call of $RecursiveCT$, made from itself, gets into the higher order context. For instance the first call of $RecursiveCT(node, pos)$: suppose that the next transformation symbols, that has been found, are $\delta$ and $\epsilon$, then they correspond to the exchange of trigrams of the form $\alpha\gamma\delta$ and $\alpha\gamma\epsilon$ in the former text. Due to the fact that we are collecting positions of each replacement, the longer the context of transformation is, the smaller is the space(the size of the set of positions) where the transformation is applied.

\begin{algorithm}
\caption{Recursive search and application of context transformation}
\label{alg:max-ent-rec}
\begin{algorithmic}
\Function{SearchTransformation}{$context\_symbol$}
\State $dH \gets 0$
\State $\beta \gets context\_symbol$
\State $\gamma \gets context\_symbol$
\For{$i=0$ to $|\Sigma|$}
\If{$s[i] == context\_symbol}$
\State \textbf{continue}
\EndIf
\For{$j=i+1$ to $|\Sigma|$}
\State $temp\_dH = EntropyChange(s[i],s[j])$
\If{$temp\_dH < dH\ AND\ s[j] \neq context\_symbol$}
\State $dH = temp\_dH$
\State $\beta = s[i]$
\State $\gamma = s[j]$
\EndIf
\EndFor
\EndFor
\State \Return $[\beta,\gamma, dH]$
\EndFunction
\end{algorithmic}
\end{algorithm}

The $SearchTransformation$ function returns the maximally entropy reducing transformation. The $EntropyChange$ function in the listing of Algorithm 3 computes $dH$ using equation (4).
The search omits transformations when $s[j] = \alpha$ and frequency $f(\alpha,w,\alpha)\neq 0$, this is very important, because this condition ensures the existence of inverse transformation. The context symbol $\alpha$ is a symbol found in the first phase of the search algorithm and it is a symbol from which the transformation begins. Before the transformations starts, all positions of context symbol are found, if we would allow transformation of context symbol, then it would become impossible to distinguish between $\alpha$ that emerges due to the transformation and $\alpha$ in the former message. On the other hand, it is possible to introduce $\alpha$ into transformation if $f(\alpha,w,\alpha) = 0$, i.e. no word of the form $\alpha w \alpha$ is a substring of the former message.

There are situations when the change of symbols ordering occurs and even it reduces entropy. Suppose two consecutive symbols of ordered set $s$: $s[i]$ and $s[i+1]$, given that probabilities  $p(s[i])>p(s[i+1])$, let $\beta=s[i]$ and $\gamma=s[i+1]$ and context symbol to be $\alpha$, if $p(\alpha,\beta) - p(\alpha, \gamma) > p(\beta) - p(\gamma)$ then it is convenient to apply transformation $GCT(\alpha\beta\leftrightarrow\alpha\gamma)$ even though $p(\alpha,\beta)>p(\alpha,\gamma)$ and $p(\beta)>p(\gamma)$. Such transformation would eventually switch the order of symbols $\beta$ and $\gamma$ and reduces entropy. As a conclusion we remark that symbols sorted into the order given in the initial phase do not neccessarily have to have the same order in the final state.

\SubSection{Storage}
The result of transformation is stored in two parts, the first part contains description of context transformation tree and the second part contains transformed message. The transformed message has the same size as the input message, the additional overhead is given by the first part. The example of the resulted context transformation tree is visualized in Figure 1. The forward transformation is represented by paths from the root to the leaf node beginning with the most left path and finishing with the most right path.

\begin{figure}[ht!]
\centering
\includegraphics[width=4.0in]{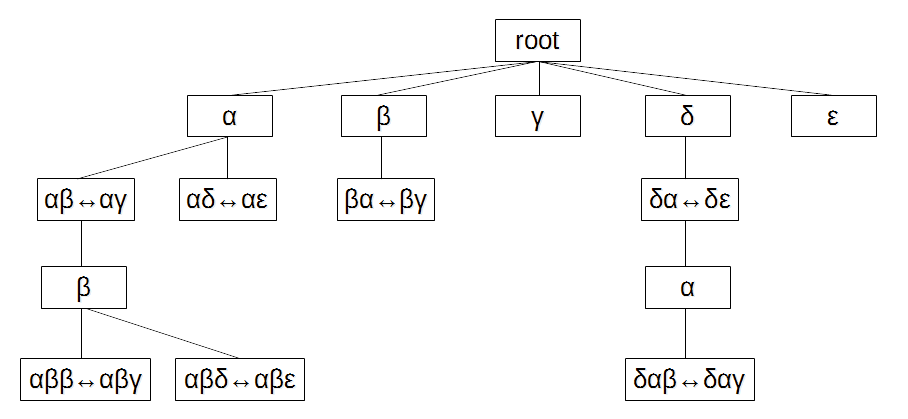}
\caption{\label{fig:3t}%
Context transformation tree corresponding to consecutive transformations: $\alpha\beta\leftrightarrow\alpha\gamma$ followed by two higher order transformations of trigrams $\alpha\beta\beta\leftrightarrow\alpha\beta\gamma$ and $\alpha\beta\delta\leftrightarrow\alpha\beta\epsilon$, the next transformation found and applied is $\alpha\delta\leftrightarrow\alpha\epsilon$. Then the second most frequent symbol is selected and the transformation $\beta\alpha\leftrightarrow\beta\gamma$ is applied. For the third and fifth context symbols $\gamma$ and $\epsilon$ there are no entropy reducing transformations. For the transformation $\delta\alpha\leftrightarrow\delta\epsilon$ of the fourth context symbol $\delta$ there is one higher order transformations $\delta\alpha\beta\leftrightarrow\delta\alpha\gamma$.}
\label{fig_sim}
\end{figure}

The tree is stored using recursive function $SaveHeader(rootNode)$ given in listing of Algorithm 4. In our implementation arguments of $WRITE$ function are one byte values.

\begin{algorithm}[!ht]
\begin{algorithmic}
\Function{SaveHeader}{$treeNode$}
\State $WRITE(treeNode.ChildrenLength())$
\For{$i=1$ to $treeNode.ChildrenLength()$}
\If{$treeNode.IsContextSymbolNode == \textbf{true}$}
\State $WRITE(node.Child(i).\alpha)$
\Else
\State$ WRITE(treeNode.Child(i).\beta)$
\State $WRITE(treeNode.Child(i).\gamma)$
\EndIf
\State $SaveHeader(treeNode.Child(i))$
\EndFor
\EndFunction
\end{algorithmic}
\caption{\label{fig:gct}%
Recursive header saving algorithm}
\end{algorithm}

\SubSection{Inverse transformation}

The inverse transformation $GCT^{-1} = GCT_{\rightarrow}$ is applied from the beginning to the end of a transfomed message. Transformations contained in the context transformation tree are applied in reversed order, beginning with the most right path and finishing with the most left path from the root node to the leaf node. As we briefly mentioned in the section about the algorithm searching the entropy reducing transformations, the algorithm transforms words $w_j$ residing between two consecutive context symbols $\alpha$. Actualy the algorithm behaves exactly like if we split the message on parts separated by context symbol $\alpha$ and we would transform each word independently of each other. The inverse transformation can be viewed from the same perspective, we found the first occurence of context symbol $\alpha$ and at the position that follows $\alpha$ we apply transformation from the tree in reversed order. Since we know that the transformation has been taken over symbols between two consecutive symbols $\alpha$, we can apply inverse transformation and the next occurence of symbol $\alpha$ will be the next position where the inverse transformation will be applied again.

\Section{Transformation of languages to languages with lower entropy}
\label{sec:languages}

Usually the most frequent character in a text is a space character separating individual words in a sentence.
Figure \ref{fig:book1-text-cut}.b) gives the example of a message after transformation of words residing between two consecutive space characters. Several words beginning on the new line, i.e. words that follows the end of line character, remains untransformed and they will be transformed by some later transformation. Figure \ref{fig:book1-text-cut}.c) then presents final state of the message. 

\begin{figure}[ht!]
\centering
\includegraphics[width=5.5in]{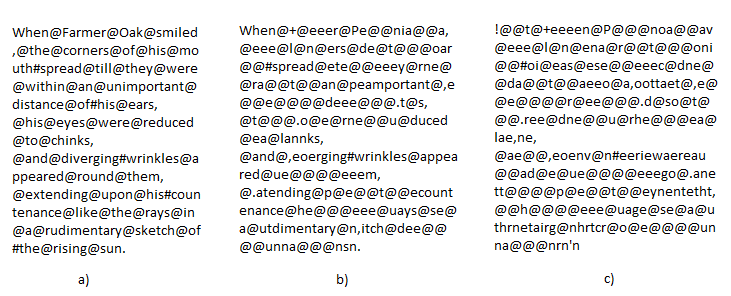}
\caption{\label{fig:book1-text-cut}%
Slice from the Calgary corpus book1 file, a) before transformation, b) after 10k transformations, c) final state(approx. 20k transformations). To emphasize differences between states of the message, all space characters has been replaced by @ and the end of line characters by \#.}
\end{figure}

The search algorithm can be modified to transform words residing between arbitrary distinct symbols. Suppose that we would like to transform all tags in HTML document, such that symbols between `\textless' and `\textgreater' become ones of low entropy. In the listing of Algorithm 3. the initial context symbol would be `\textless' and the transformation stops, when no other entropy reducing transformation exists than the one modifying occurence of  `\textgreater'. Rules of this kind can be very simply integrated into the search algorithm.

\Section{Results}
\label{sec:results}

Our results are summarized in Table 1. We haven't focused at the compression of the context transformation tree yet, instead we estimated upper bound of entropy of the tree's description using the PAQ8 algorithm created by Matt Mahoney \cite{paq8}. 

Let $H_t$ is entropy of the transformed message $m$ of the size $|m|$, let $|h_{paq8}|$ is a size of tree compressed by PAQ8 in bits, then the bits per symbol(byte) ratio $F$ of the resulted message is computed as:

\[
F = \frac{|h_{paq8}| + H_t|m|}{|m|}
\]

The bits per byte ratio has been computed for several settings of the $lim$ variable given in the listing of Algorithm 2. The value of the limit is given in the subscript of the column name and represents number of bits, for instance $H_{2}$, resp. $F_{2}$, means entropy, resp. bits per byte ratio, for $lim=2$.

\begin{table}[!ht]
\begin{center}
{
\renewcommand{\baselinestretch}{1}\footnotesize
\begin{tabular}{|c|c|c|c|c|c|c|c|c|c|}
\hline 
File  & $H$ & $H_{0}$ & $H_{4}$ & $H_{8}$ & $F_{0}$ & $F_{4}$ & $F_{8}$ & gzip & bzip2\tabularnewline
\hline 
\hline 
bib & 5.201 & 2.355 & 2.938 & 3.214 &3.545 & 3.529 & 3.551 & 2.509 & \textbf{1.975}\tabularnewline
\hline 
book1 & 4.527 & 3.001 & 3.318 & 3.414 & 3.764 & 3.581 & 3.552 & 3.250 & \textbf{2.420}\tabularnewline
\hline 
obj1 & 5.948 & \textbf{1.347} & 2.419 & 3.844 & 5.572  & 5.650 & 4.945 & 3.812 & 4.015\tabularnewline
\hline 
paper1 & 4.983 & \textbf{2.316} & 3.058 & 3.344 & 4.062 & 3.919 & 3.840 & 2.789 & 2.492\tabularnewline
\hline
paper2 & 4.601 & 2.471 & 3.019 & 3.273 & 3.765 & 3.587 & 3.581 & 2.887 & \textbf{2.437}\tabularnewline
\hline 
progc & 5.199 & \textbf{2.346} & 3.071 & 3.421 & 4.336& 4.103 & 3.989 & 2.677 & 2.533\tabularnewline
\hline 
progp & 4.868 & 1.766 & 2.296 & 2.683 & 3.242 & 3.299 & 3.266 & 1.811 & \textbf{1.735}\tabularnewline
\hline 
trans & 5.532 & \textbf{1.473} & 2.289 & 2.667 & 2.784 & 3.141 & 3.205 & 1.610 & 1.528\tabularnewline
\hline 
\hline
alice29.txt & 4.567 & 2.608 & 2.971 & 3.141 & 3.578 & 3.387 & 3.372 & 2.850 & \textbf{2.272}\tabularnewline
\hline
bible.txt  & 4.342 & 2.662 & 2.727 & 2.757 & 2.756 & 2.762 & 2.762 & 2.201 & \textbf{1.672} \tabularnewline
\hline
cp.html & 5.229 & \textbf{1.593} & 2.767 & 3.221 & 4.248 & 3.992 & 3.817 & 2.593 & 2.479\tabularnewline
\hline
kennedy.xls & 3.573 & 3.143  & 3.146 & 3.150 & 3.164 & 3.158 & 3.156 & 1.629 & \textbf{1.012}\tabularnewline
\hline
world192 & 4.998 & 2.617 & 2.803 & 2.931 & 3.094 & 3.037 & 3.057 & 2.259 & \textbf{1.583}\tabularnewline
\hline
\end{tabular}
}
\caption{\label{tab:res2}%
The summary of achieved results for selected files from Calgary\cite{calgary} and Canterbury\cite{canterbury} corpuses in comparison with standard compression methods of gzip and bzip2. $H$ represents entropy of input file. Values of $F_{lim}$ and results of standard methods are measured in bits per byte.}
\end{center}
\end{table}

The case $H_0$ is the extreme case when all accessible entropy reducing transformations were applied and in several cases the achieved entropy rate was better than the one achieved by standard methods, but the large number of transformations leads to a growth of context transformation tree and as a consequence the resulted file size is significantly larger. The difference $F_{lim} - H_{lim}$ gets smaller with larger files(bible.txt - 3.85 MB) because of the relative size of the tree against the size of the message, but also due to the observation that larger files are less sensible to the selection of the value of entropy reduction limit $lim$.

\Section{Conclusion}
\label{sec:conclusion}
We proposed a transformation of higher order contexts based on the concept of generalized context transformations. Our algorithm is
able to significantly reduce entropy of input messages, but it is only of limited ability to efficiently store the information that is neccessary to store description of all transformations. The efficient storage of context transformation tree will be one of the areas we will focus at in the future, and it is fair to say, that it is a main weakness of our algorithm. On the other hand, this issue partially relates to the size of the input message, larger files like bible.txt from Canterbury Corpus have relatively small context transformation tree in comparison with the overall size of input. Under assumption, that the total number of applied transformations grows logarithmically with the size of the file, then we assume that the effect of the tree storage, on the resulted bits per byte ratio, should be decreasing with increasing size of file.

\Section{References}
\bibliographystyle{IEEEtran}
\bibliography{refs}

\end{document}